\newtheorem{theorem}{Theorem}
\newtheorem{proposition}{Proposition}
\newtheorem{lemma}{Lemma}
\newtheorem{definition}{Definition}
\newtheorem{remark}{Remark}
\newcommand{\argmax}{\arg\!\max}
\begin{document}

\title{A Distributed Satisfactory Content Delivery Scheme for QoS Provisioning in Delay Tolerant Networks}
\author{\IEEEauthorblockN{Sidi Ahmed Ezzahidi}
\IEEEauthorblockA{LIMIARF,\\ University of Mohammed V, \\B.P. 1014 RP, Rabat, Morocco\\
sa.ezzahidi@gmail.com}
\and
\IEEEauthorblockN{Essaid Sabir}
\IEEEauthorblockA{UBICOM Research Group,\\ ENSEM, Hassan II \\University of Casablanca, Morocco.\\
e.sabir@ensem.ac.ma}
\and
\IEEEauthorblockN{ Mounir Ghogho}
\IEEEauthorblockA{School of Electronic and Electrical Engineering,\\ University of Leeds, \\United Kingdom\\
m.ghogho@leeds.ac.uk}
}

\maketitle
\begin{abstract}
We deal in this paper with the content forwarding problem in Delay Tolerant Networks (DTNs). We first formulate the content delivery interaction as a non-cooperative satisfaction game. On one hand, the source node seeks to ensure a delivery probability above some given threshold. On the other hand, the relay nodes seek to maximize their own payoffs. The source node offers a reward (virtual coins) to the relay which  caches and forwards the file to the final destination. Each relay faces the dilemma of accepting/rejecting to cache the source's file. Cooperation incurs energy cost due to caching, carrying and forwarding the source's file. Yet, when a relay accepts to cooperate, it may receive some reward if it succeeds to be the first relay to forward the content to the destination. Otherwise, the relay may receive some penalty in the form of a constant regret; the latter parameter is introduced to make incentive for cooperation. Next, we introduce the concept of Satisfaction Equilibrium (SE) as a solution concept to the induced game. Now, the source node is solely interested in reaching a file delivery probability greater than some given threshold, while the relays behave rationally to maximize their respective payoffs. Full characterizations of the SEs for both pure and mixed strategies are derived. Furthermore, we propose two learning algorithms allowing the players (source/relays) to reach the SE strategies. Finally, extensive numerical investigations and some learning simulations are carried out to illustrate the behaviour of the interacting nodes.
\end{abstract}

\section{Introduction}
Nowadays, self-organizing is tremendously becoming a key feature for current and future mobile networking. 
Moreover, numerous new applications and some special circumstances require the nodes/network to be self-organizing, self-configuring and self-healing. In order to overcome extreme circumstances (earthquakes, disasters, ...), massive access to the network (sport events, festival, ...) or lack of infrastructure in general, a class of self-organizing networks called Delay Tolerant Networks (DTN) \cite{fall2003delay} has been proposed and are continuously gaining interest. A DTN is a class of an infrastructure-less and fully distributed wireless networks. Such a network is designed to operate over arbitrary distances, including very small scale (e.g., cells communications) to ultra large scale (e.g., interplanetary communications). Intermittent connectivity and an excessively large delay may occur very often in such an environment, which makes the end-to-end connectivity a very challenging issue. Thus, the use of store-carry-forward paradigm seems to be an attractive solution. The main idea here is to exploit the opportunistic inter-contacts between relay nodes to cache-and-forward given data to the final destination.\\

\par DTNs-based applications are very various \cite{gao2015delay}, including digital communication for rural areas (e.g., DarkNet, TrainNet, KioskNet, etc.), personal/wildlife communications (e.g., Pollen, Body Area Networks, ZebraNet, etc.), battlefield communications (e.g., Military Missions and Airborne Networks) disaster rescues and environmental monitoring communications. Moreover, delay tolerant networks could be an attractive/efficient solution to offload legacy networks. They may help out to control the congestion caused by the exponential growth of traffic, heterogeneity in infrastructure/technologies (Device-to-Device, RFID, Drone-based backhaul, etc.), see Fig.\ref{5g}. Moreover, they are expected to be a part of the next generation radio communication system such as 4G LTE-Advanced networks, 5G and Internet of Things (IoT) \cite{prasad2015efficient, 10.4108/eai.26-10-2015.150598}. We recall also the ``Any Time, Any Where, Any Device'' (ATAWAD) paradigm which has been fueled by the prevalence of devices, exploiting  collaboration between devices, enhancing functionalities and speeding up the access to data. Technically, DTNs can be helpful to assist data transmission to/from isolated regions where connectivity is hard or even impossible to establish. Moreover, the DTNs devices could be efficient to enhance Quality of Service (QoS) and to reduce energy consumption by strategically offloading the traffic from the network backbone. However, getting nodes in the network to cooperate and act as relays is a still fundamental challenge, due to limited bandwidth, limited battery energy and limited storage capacity. Indeed, participating in data transmission and content caching incurs energy consumption.  Therefore, the relays may exhibit a selfish behavior, which significantly degrades the DTNs performance. The latter is a key motivation to develop and deploy efficient and distributed mechanisms to solve the inherent cooperation issue.\\

\begin{figure}[!htbp]
\centerline{\includegraphics[height=5cm,width=9cm]{./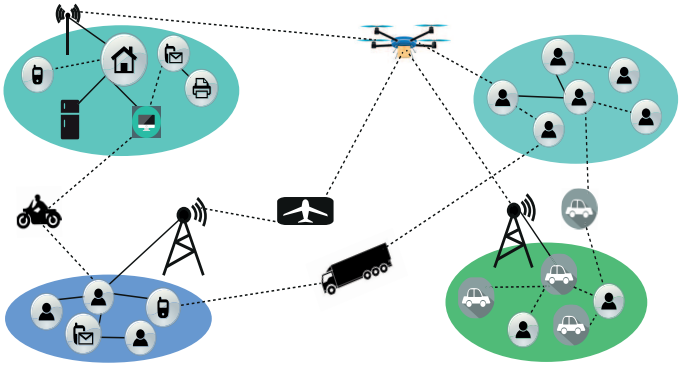}}
\caption{\footnotesize{DTN communications over Device-to-Device communications.}}\label{5g}
\end{figure} 

\par
In order to evaluate the network performance, several metrics can be utilized. For instance, one can use the delivery rate, the content loss rate, the protocols overhead, the end-to-end delay, the expected number of transmissions and the energy consumed. Here we consider optimizing the energy consumption subject to the average delivery rate being above some given threshold. In other words, this work is devoted to presenting a new fully distributed framework for QoS support in DTN-like networks.\\

\par 
Self-configuring capability and the distributed nature of DTNs are proven to induce selfish behaviour at nodes level, see \cite{karaliopoulos2009assessing,li2010routing,li2010evaluating}. Several incentive mechanisms have been designed to sustain cooperation among selfish relay nodes. Game theory seems to be the perfect tool to design  such mechanisms. It is mainly viewed as a tool to investigate the decision-making by the system through an equilibrium analysis instead of optimality analysis. The authors in \cite{jiang2013survey} presented an interesting survey on incentive mechanisms for DTNs. They compared four schemes: 1) virtual currency based incentive mechanism, 2) credit-based incentive mechanism, 3) game-theory-based incentive mechanism and 4) combined incentive mechanism. In \cite{hulke2015game}, the authors discussed different game theoretical-inspired incentive mechanisms and analyzed them while pointing out their advantages and drawbacks. Interestingly, many works have proposed curious schemes to encourage nodes to cooperate. For example, the authors in \cite{el2013evolutionary} used evolutionary games theory and addressed how a reward mechanism could efficiently induce cooperation among relay nodes in delay tolerant networks. The authors in \cite{brun2014modeling} suggested a simple reward-based mechanism scheme, and show how the source could optimally set the reward value based on the relays information.\\

\par
In \cite{long2007non}, a repeated game is constructed in order to capture the interaction between mobile nodes in terms of non-cooperative power control. The authors in \cite{perlaza2012quality} introduced the framework of satisfaction form of game to model the problem of QoS provisioning in decentralized networks, and they provided a comparison between the concept of Generalized Nash Equilibrium (GNE) and the concept of Satisfaction Equilibrium (SE). To the best of our knowledge, this is the first work conducted to analyze distributed caching in DTNs under quality of service constraints using satisfaction game approach.\\

\par
The main  contribution  of  this  paper is the design of a new distributed framework for QoS provisioning in DTN-like networks using the concept of satisfaction equilibrium and a reward-based incentive mechanism. By offering some reward (virtual coins) to relay nodes, the source node can efficiently encourage them to use a part of their battery energy  and  participate to forward a file to some given destination, i.e, caching a given file and wait to find a persistent connection with the interested destination. Within this framework, the resulting distributed caching problem is investigated using the powerful tool of non-cooperative game theory. Notice that in this paper we use the concept of SE \cite{tian1992existence} as a solution concept instead of the well-known Nash equilibrium concept. Now, one needs not only a stable state of the game but also  providing certain performance requirements. More precisely, the source has an specific problem which is offering a minimum reward ensuring that the relays delivery probability does not go bellow a threshold value, while the relays have the choice to "accept" or "reject" this offer, depending on the reward value whether it is beneficial or not. Next, we exhibit sufficient conditions for existence of an SE for both pure strategies and mixed strategies. Moreover, aiming to understand the behavior of the DTN source-relay nodes during the interaction and the eventual convergence to the SE, we propose two stochastic algorithms for both the source and the relay nodes. \\

\par
The remaining sections of the paper are organized as follows. In section \ref{MDPF}, we describe the problem, its formulation and our solution design. In Section \ref{GTM}, we present the game theoretical model including, utility functions and the SE formal definition. Section \ref{SEA} exhibits the satisfaction equilibria computation and an analysis of their efficiency. Next, we describe the stochastic learning algorithms adopted in Section \ref{LA}. Section \ref{MI} provides some numerical investigations and simulation runs to claim our work and a conclusion is drawn in Section \ref{Con}.

\section{Problem Formulation}\label{MDPF}

We consider a delay tolerant network including a pair of source-destination, and $n$ relay nodes. When a contact between the source node and a relay node takes place, the source transmits a data file to the relay node. Relay nodes, moving independently in the network area, store the file, carry it and wait until having a direct link opportunity with the destination node to forward the file. We next list the assumptions considered in this paper:
\begin{itemize}
\item The file to forward has a finite lifetime $\tau$ (called also horizon) during which the destination is interested in its content;
\item For simplicity and without loss of generality, we assume that  all nodes are identical and equipped with the same wireless interface;
\item We consider that the relay nodes use a two-hop routing policy \cite{panagakis2007study}, which works as follows: when a relay node receives a copy of the file from the source, it stores it and forwards it to the destination node when met within the file's lifetime. The choice of this routing protocol is motivated by an energy efficiency purpose. Indeed such a routing scheme has a good delivery/energy efficiency trade-off;
\item Occurrence of the contacts between any two nodes follows Poisson distribution. Thus, the time interval between two successive contacts (inter-contact time) for each pair of relay nodes is exponentially distributed with a pairwise meeting rate $\lambda \geq 0$. A comprehensive discussion of this modeling can be found in \cite{chaintreau2007impact} and \cite{groenevelt2005message}. We further consider that the contact time is large enough for the complete transmission of the source's file.
\end{itemize} 
The source-relay contact probability within the file's life time $\tau$ can be expressed as,
 \begin{eqnarray}
p_c =p( t \leq \tau)= \int_0^{\tau}  \lambda e^{-\lambda t} dt= 1-e^{-\lambda \tau}.
\end{eqnarray}
Each relar node can be in one of the three states: 1) listening (sending beacons for discovery purposes), 2) transmitting (when in contact with the destination), or 3) receiving data from the source node. In this paper, we neglect the energy consumption related to the discovery/listening state. Thus, the energy consumption per node includes the energy consumed during the receiving state (denoted $e_r$), the energy consumed while transmitting the file to the destination (denoted $e_t$) and the storing energy per time slot (denoted $e$). It follows that the mean energy $e_s$ dissipated while caching a file with lifetime $\tau$ can be  written as
\begin{eqnarray}
e_s &=& \int_0^{\tau} e \lambda t e^{-\lambda t} dt\nonumber \\ 
	&=& e\left[\frac{1- (1+\lambda \tau) e^{-\lambda \tau}}{\lambda}\right] \nonumber \\
	&=& \frac{e}{\lambda}\left(1-Q_{\tau}\right).
\end{eqnarray}
where $Q_{\tau}$ denotes the probability that a given relay fails in relaying a given file to the destination \cite{altman2009competition}. Hence, the total energy consumption per node becomes $\eta =e_r+e_t+e_s$.

\subsection{Problem Formulation and Solution Design}\label{DOCP}
 
 It is desirable that the nodes behave in a fully cooperative fashion in order to maximize the overall delivery rate. However, in real deployments, the relay nodes may not cooperate due to energy constraints and conflicting interests. Hence, a degradation of the network performance may be observed. To deal with this problem and encourage/force the relay nodes to participate in the file forwarding game, we next develop a reward-based incentive mechanism.\\
 \begin{figure}[!htbp]
\centerline{\includegraphics[height=3.5cm,width=8cm]{./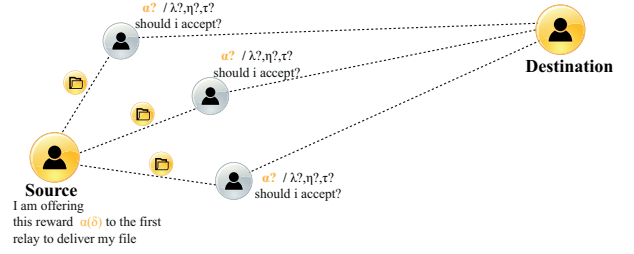}}
\caption{\footnotesize{The Source-Relay interaction.}}\label{Reward}
\end{figure} 

\par Fig.\ref{Reward} describes the interaction between the source node and the relay nodes under the proposed rewarding mechanism. On one hand, the source has the objective of making its file arrive to destination during the file's lifetime, so it generates a copy of this file and attempts to convince the relays encountered to forward it. On the other hand, the encountered relays can either accept (strategy `a') or reject (strategy `r') caching and relaying this file. In order for the long term average delivery rate of the source files to be above some given threshold (minimum QoS), the source needs to determine the appropriate reward to offer to the cooperating relays. When a source-relay contact occurs, the rational relay nodes seek to pick a strategy that maximizes its own payoff. In the meantime, the source offers the reward value merely to satisfy an individual constraint which consists of reaching  a delivery rate higher that some threshold $\delta$. Moreover, when a  relay node declines the forwarding offer or fails to reach the destination within the file's lifetime, it incurs a penalty in the form of a regret of declining or failing. In the next section, we construct a game theoretic framework to capture the performance of the proposed reward-based mechanism.

\section{Game Theoretical Model}\label{GTM}
Game theory has been used to solve problems in ad hoc, fixed and cellular networks. It is a powerful tool for the analysis of distributed networks. Its equilibrium concept and formulation of the utility function under constraints permit to study the system behavior and its decision strategies. It is mainly used to study the decision-making by the system through an equilibrium analysis. Indeed, in this section we first formulate our case of DTN as a homogeneous One-Shot caching Game, where the source and relays are selfish players playing independently and simultaneously. The source's strategy is the choice of the continuous-valued reward $\alpha$, taken from the interval $[0,\alpha_{max}]$,  and the relays have two discrete strategies accept 'a' or reject 'r'. Then, we present the utility function which considers the proposed reward mechanism, the energy consumption, the delivery probability and the regret values. Next, we study the existence and uniqueness of equilibria in pure game when the players choose to play their pure strategies and in mixed game when they independently and randomly select their strategies; in the mixed game, each player (i.e. each relay in contact with the source) accepts with probability $p \in [0,1]$ or reject  with $1-p$. Later, we give the conditions driving the system to an operating/stable point, namely a Satisfaction Equilibrium \cite{nash1950equilibrium}. \par
In fact, according to the game's concept, each  player's action/strategy corresponds to a certain utility. Rationally, when a relay receives a file from the source, its utility depends  on the choice of other relays' strategies. As mentioned above, the relays 'meet' the source with probability $p_c$ and all arrivals are independent. Therefore, the average number of relays among $n$ which are in contact with the source is $\tilde{n}=np_c$.
\par
Further, according to  \cite{altman2009competition}, we define $\phi^i(\tilde{n})$ the delivery probability that a given relay $i$ among $\tilde{n}$, plays pure strategy 'a' and succeeds to deliver a given file ( i.e., the first to deliver the file to the destination) as 
\begin{eqnarray}
\label{Psucc}
   \phi^i(\tilde{n})&=& (1-q_{\tau}) \sum_{j=1}^{\tilde{n}} {\tilde{n}-1 \choose j-1} \frac{(1-q_{\tau})^{\tilde{n}-1} q_{\tau}^{\tilde{n}-j}}{j}\nonumber\\&=&\frac{1-q_{\tau}^{\tilde{n}} }{\tilde{n}}.
\end{eqnarray}
Next, we will formulate the utility function for each player.
\subsection{Utilities Formulation}
We define the utility function of each relay as the difference between the reward that it can win from the source and the energy consumed to cooperate with the source. Thus, we denote $U_i(\mbox{'a'},\alpha,,n_a)$ the utility function of  a given relay when it plays its pure strategy accept 'a', and  $U_i(\mbox{'r'},\alpha,,n_a)$ when it plays its pure strategy reject 'r'.
\begin{equation}
  \label{ru}
\begin{cases}
& \mathcal{U}_i(\mbox{'a'},\alpha,n_a)= \alpha \phi^i(n_a) -\sigma(1- \phi^i(n_a))-\eta\\ \\
 &\mathcal{U}_i(\mbox{'r'},\alpha,n_a)=-\alpha  \phi^i(n_a)-\gamma \\
\end{cases}
\end{equation}\par
where  $\sigma$ and $\gamma$ are appropriate constants (regrets) that the relay incurs when it accepts to cache a given file but it does not succeed to deliver it during its lifetime and when it declines the source offer respectively. $n_a$ denotes the number of relays out of $\tilde{n}$, that are have been in contact with the source and that accepted to cache .\par
Furthermore, because of the selfish behavior, each relay decision is based on self-optimisation; the relay's  objective is then to maximize its own utility in a distributed fashion, i.e.
\begin{equation*}
\max_{v \in \{a,r\}}  \mathcal{U}_i(\mbox{'v'},\alpha,n_a), \quad \forall i =0, 1, \cdots, n_a.
\end{equation*}\par
The utility function of source $\mathcal{U}_s(\mbox{'a'},,n_a)$  is merely the delivery probability of  all relays that have accepted caching.
\begin{equation*}
 \mathcal{U}_s(\mbox{'a'},n_a)=n_a \phi^i(n_a), \quad \forall i =0, 1, \cdots, n_a.
\end{equation*}

\subsection{Satisfaction Equilibrium}
Now, we present a definition for the Nash Equilibrium \cite{nash1950equilibrium}, which is the point where no player can improve his payoff by making individual changes in his decisions. Precisely, in our source-relays game we introduce an equilibrium called satisfactory equilibrium \cite{tian1992existence}: the source seeks to satisfy the constraint that its 'well-being' should be greater than a fixed value (threshold), given the strategies adopted by the relays, which behave rationally to maximize their ``well-being'' by playing their optimal strategies.
\begin{definition}
 At satisfaction Nash  equilibrium $(p^*,\alpha^*)$ we have, 
\begin{equation*}
 E[\mathcal{U}_s(\mbox{'a'},n)] \geq \delta, \quad \forall \alpha \in [0,\alpha_{max}],
\end{equation*}
and
\begin{equation*}
\displaystyle p^* \in \argmax_{p \in [0,1]}E[ \mathcal{U}_i (p,\alpha^*,n)], \quad  \forall i =0, 1, \cdots, n_a.
 \end{equation*}
\end{definition}\par
Next, we will perform a thorough analysis of SE, its existence and uniqueness in pure an mixed strategies. Moreover, we discuss the full conditions that drive our distributed system to this steady point, if there exists one. 

\section{Satisfaction equilibrium analysis}\label{SEA}
We analyze here the structure of the content caching game solutions and we derive many sufficient conditions for the existence of a SE.

\begin{theorem}
(Nash's Theorem \cite{nash1950equilibrium}) Every finite game in strategic form (i.e., with finite number of players and finite number of pure strategies for each player) has at least one Nash
equilibrium (NE) (involving pure or mixed strategies).
\end{theorem}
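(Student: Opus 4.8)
The plan is to follow Nash's classical fixed-point argument, since the statement is exactly the existence of equilibria for finite strategic-form games. First I would set up the strategy spaces: for each player $i$ with a finite pure-strategy set, the mixed strategies form a probability simplex $\Delta_i$, and the joint mixed-strategy space $\Delta = \prod_i \Delta_i$ is a nonempty, compact, convex subset of a Euclidean space. The crucial structural fact I would record is that each expected utility $E[\mathcal{U}_i]$ is multilinear in the mixed strategies, hence continuous on $\Delta$; this is what lets a fixed-point theorem bite.

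Next I would construct a continuous self-map whose fixed points coincide with the Nash equilibria. For each player $i$ and each pure strategy $k$, define the nonnegative gain $g_i^k(\mathbf{p}) = \max\{0, E[\mathcal{U}_i(k, \mathbf{p}_{-i})] - E[\mathcal{U}_i(\mathbf{p})]\}$, measuring the payoff improvement from shifting all weight onto $k$. Then define $T : \Delta \to \Delta$ componentwise by
$$T_i(\mathbf{p})(k) = \frac{p_i(k) + g_i^k(\mathbf{p})}{1 + \sum_{l} g_i^l(\mathbf{p})}.$$
Continuity of $T$ follows from continuity of the $g_i^k$ (a composition of the continuous utilities with $\max$), and each $T_i(\mathbf{p})$ is a valid probability vector, so $T$ maps the compact convex set $\Delta$ into itself. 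Applying Brouwer's fixed-point theorem then yields a fixed point $\mathbf{p}^*$.

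Finally I would show that any fixed point is a Nash equilibrium. Fix a player $i$. Since $E[\mathcal{U}_i(\mathbf{p}^*)]$ is the weighted average of the deviation payoffs $E[\mathcal{U}_i(k, \mathbf{p}^*_{-i})]$ over the support of $p_i^*$, there must exist a pure strategy $k$ with $p_i^*(k) > 0$ whose deviation payoff does not exceed this average, so $g_i^k(\mathbf{p}^*) = 0$. Substituting into the fixed-point equation gives $p_i^*(k)\sum_{l} g_i^l(\mathbf{p}^*) = g_i^k(\mathbf{p}^*) = 0$; since $p_i^*(k) > 0$ this forces $\sum_{l} g_i^l(\mathbf{p}^*) = 0$, hence every gain vanishes. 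Therefore no player can strictly improve by deviating to any pure strategy, which is precisely the Nash equilibrium condition, and the existence of at least one such profile follows.

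I would flag this last step as the main obstacle: engineering the gain map so that its fixed points are \emph{exactly} the equilibria, and then arguing that the gains must vanish simultaneously, is the conceptual heart of the proof, whereas the compactness, convexity, and continuity checks are routine. An alternative I would keep in reserve is the Kakutani route, showing that the joint best-response correspondence is nonempty, convex-valued, and has closed graph so that its fixed points are equilibria; this trades the explicit map for a set-valued fixed-point theorem.
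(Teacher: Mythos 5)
The paper does not actually prove this statement: it is quoted verbatim as Nash's classical existence theorem with a citation to \cite{nash1950equilibrium}, and the authors immediately set it aside in favour of the satisfaction-equilibrium concept, so there is no in-paper argument to compare against. Your proposal is a correct and complete sketch of the standard proof (essentially Nash's 1951 Brouwer-based argument): the simplex setup, the multilinearity/continuity observation, the gain map $T$, and the fixed-point-implies-equilibrium step are all sound, and the averaging argument you use to force some supported pure strategy to have zero gain is exactly the right lever. The only point worth making explicit is the final reduction: the Nash condition quantifies over all mixed deviations, but since $E[\mathcal{U}_i]$ is linear in player $i$'s own mixed strategy, a profitable mixed deviation would imply a profitable pure deviation, so vanishing of all the $g_i^k$ does suffice. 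Your reserve plan via Kakutani and the best-response correspondence is the route Nash's original 1950 note takes; the Brouwer version you wrote out buys an explicit self-map at the cost of the slightly fiddly gain-function bookkeeping, while Kakutani hides that bookkeeping inside a set-valued fixed-point theorem. Either would be a legitimate proof of the theorem as stated.
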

Since the condition $\alpha(\delta)$ is imposed on the  game, the NE is not a suitable solution. Along this paper we replace the Nash equilibrium by the Satisfaction Equilibrium which is more natural.

\begin{remark}
We highlight  that the existence of  pure / mixed  SE does not necessarily imply its  uniqueness. In fact, the existence of the constraint $\alpha(\delta)$ yields the fact to provide conditions to have a unique SE very difficult.
\end{remark}

\subsection{Pure Satisfaction Equilibria (PSE)}
We turn now to derive the Satisfaction Equilibria for pure strategy case. The players act with their pure strategies.

\begin{lemma}The content caching game may have numerous PSE. Satisfaction Equilibria are any ($\alpha^*$, $n_a ^*$) solutions of the following two conditions:
\begin{equation}
n_a ^*\geq \frac{\log(1-\delta)}{log(q_{\tau})}
\end{equation}
and,
\begin{align}
\alpha^*=\frac{\lambda\sigma(n-1+q_{\tau}^{n_a})-n_a(\lambda(\gamma-e_r-e_t)-e (1-q_{\tau}))}{2\lambda(1-q_{\tau}^{n_a})}.
\end{align}
\end{lemma}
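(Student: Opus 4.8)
The plan is to exploit the fact that the two stated conditions decouple cleanly: the first is precisely the source's satisfaction constraint, while the second is precisely the relays' equilibrium (best-response) condition at the reward the source offers. I would therefore treat the two sides of the game separately and then combine them to describe the whole family of solutions.

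First I would handle the source. Using $\phi^i(n_a)=\frac{1-q_{\tau}^{n_a}}{n_a}$ from (\ref{Psucc}), the source utility collapses to $\mathcal{U}_s(\mbox{'a'},n_a)=n_a\,\phi^i(n_a)=1-q_{\tau}^{n_a}$. Imposing the satisfaction constraint $\mathcal{U}_s(\mbox{'a'},n_a)\geq\delta$ then gives $q_{\tau}^{n_a}\leq 1-\delta$. Since $q_{\tau}\in(0,1)$ we have $\log q_{\tau}<0$, so taking logarithms and dividing flips the inequality, yielding $n_a^{*}\geq\frac{\log(1-\delta)}{\log q_{\tau}}$, which is the first condition (note that $\log(1-\delta)$ and $\log q_{\tau}$ are both negative, so the bound is a positive number, as it must be).

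Next I would pin down $\alpha^{*}$ from the relays' side. The key observation is that the relays are homogeneous, so a pure profile in which $0<n_a<\tilde{n}$ relays accept can be sustained as an equilibrium only if each relay is indifferent between $\mbox{'a'}$ and $\mbox{'r'}$ at the offered reward; otherwise every relay would strictly prefer the same action and no interior split could persist. Setting $\mathcal{U}_i(\mbox{'a'},\alpha^{*},n_a)=\mathcal{U}_i(\mbox{'r'},\alpha^{*},n_a)$ and writing $\phi=\phi^i(n_a)$ gives $2\alpha^{*}\phi=\sigma(1-\phi)+\eta-\gamma$, hence $\alpha^{*}=\frac{\sigma(1-\phi)+\eta-\gamma}{2\phi}$. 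Substituting $\phi=\frac{1-q_{\tau}^{n_a}}{n_a}$, $1-\phi=\frac{n_a-1+q_{\tau}^{n_a}}{n_a}$, and $\eta=e_r+e_t+\frac{e}{\lambda}(1-q_{\tau})$, then clearing the common factor $n_a$ and multiplying through by $\lambda$, reduces this to the stated closed form for $\alpha^{*}$.

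The routine algebra of this substitution is not where the difficulty lies; the main obstacle is justifying that the satisfaction-equilibrium requirement on the relays genuinely reduces to this indifference equation, i.e. arguing carefully from the homogeneity of the relays and from how a unilateral deviation changes $n_a$ that no strict-preference pure profile with an interior split survives. I would also need to verify feasibility, namely that the resulting $\alpha^{*}$ falls in $[0,\alpha_{max}]$ and that an integer $n_a$ meeting the first bound exists, and to reconcile the $n-1$ in the statement with the $n_a-1$ produced by the computation (apparently a typographical slip). Finally, since every integer $n_a^{*}$ satisfying the first inequality yields its own admissible $\alpha^{*}$, the construction produces a whole family of pairs $(\alpha^{*},n_a^{*})$, which is exactly the ``numerous PSE'' asserted in the lemma.
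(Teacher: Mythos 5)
Your proposal follows essentially the same route as the paper's proof: the source's satisfaction constraint $n_a\phi^i(n_a)=1-q_\tau^{n_a}\geq\delta$ yields the bound on $n_a^*$, and the relays' indifference condition $\mathcal{U}_i(\mbox{'a'},\alpha^*,n_a)=\mathcal{U}_i(\mbox{'r'},\alpha^*,n_a)$ yields the closed form for $\alpha^*$. Your observation that the substitution actually produces $n_a-1+q_\tau^{n_a}$ rather than $n-1+q_\tau^{n_a}$ in the numerator is correct and points to a typographical slip in the lemma as stated; your additional care in justifying the indifference step (which the paper asserts rather than argues) is a strengthening, not a deviation.
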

\begin{proof}
Assume that the profile $ \mathcal{J}=(\overbrace{\mbox{'a'},\mbox{'a'},...,\mbox{'a'}}^{n_a},\overbrace{\mbox{'r'},\mbox{'r'}....,\mbox{'r'}}^{n-n_a})$ is a satisfaction equilibrium. Then
\begin{equation*}
\begin{cases}
& \mathcal{U}_i(\mbox{'a'},\alpha,n_a) \geq \mathcal{U}_i(\mbox{'r'},\alpha,n_a) ,\quad  \forall i =0, 1, \cdots, n_a \\ \\
 &\mathcal{U}_i(\mbox{'a'},\alpha,n_a) \leq \mathcal{U}_i(\mbox{'r'},\alpha,n_a),  \quad \forall i =0, 1, \cdots, n-n_a \\
\end{cases}
\end{equation*}
Hence, the relays  are indifferent between the two strategies \mbox{'a'}, \mbox{'r'}, then 
\begin{equation*}
U_i(\mbox{'a'},\alpha^*,n_a) = U_i(\mbox{'r'},\alpha^*,n_a),
\end{equation*}
\begin{equation*}
\phi^i(n_a) -\sigma(1- \phi^i(n_a))-\eta= -\alpha  \phi^i(n_a)-\gamma.
\end{equation*}
After a few lines of algebra, we find
\begin{align*}
\alpha^*=\frac{\lambda\sigma(n-1+q_{\tau}^{n_a})-n_a(\lambda(\gamma-e_r-e_t)-e (1-q_{\tau}))}{2\lambda(1-q_{\tau}^{n_a})}.
\end{align*}
The source problem consists of assuring a succeed relays' probability  value
\begin{equation*}
 n_a\phi^i(n_a) \geq \delta,
\end{equation*}
then,
\begin{equation*}
1- q_{\tau}^{n_a} \geq \delta, \quad \Longrightarrow  \quad  n_a ^*\geq \frac{\log(1-\delta)}{log(q_{\tau})}
\end{equation*}
\end{proof}\par
The pure equilibrium could fail to achieve a certain lucidity between relay nodes since only a part of relays may accept to cache the file. To solve this problem, we use another concept of equilibrium, named Mixed Satisfaction Equilibrium, in which the relay will accept to cache the file with some probability.

\subsection{Mixed Satisfaction Equilibria (MSE)}
When mixed strategy is allowed, the relays randomize between accepting and rejecting the source offer according to common probability distribution, accepting with $p$, rejecting  with  $1-p$.
\begin{lemma}The satisfactory caching game has infinitely many Mixed Satisfaction Equilibria ($\alpha^*$, $p^*$). They are solutions of the following system:
\begin{equation}
\begin{cases}
& \alpha^*=\frac{\lambda\sigma(n-1+(1-(1-q_{\tau})p_cp^*)^{n})-n(\lambda(\gamma-e_r-e_t)-e (1-q_{\tau}))}{2\lambda(1-(1-(1-q_{\tau})p_cp^*)^{n})}.\\  \\
&   p^* \geq \frac{1- \sqrt[n]{1- \delta}}{(1-q_{\tau})p_c} \\
\end{cases}
\end{equation}
\end{lemma}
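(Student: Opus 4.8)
The plan is to mirror the proof of the previous (PSE) lemma, replacing the pure-strategy count $n_a$ by the full population $n$ and replacing the per-relay failure probability $q_\tau$ by the effective per-relay non-delivery probability induced by mixing. First I would compute the mixed-strategy analogue of $\phi^i$. A tagged relay actually delivers the file only if three independent events occur: it is in contact with the source (probability $p_c$), it chooses to accept (probability $p$), and it reaches the destination before the deadline (probability $1-q_\tau$). Hence each relay fails to deliver with probability $1-(1-q_\tau)p_c p$, and since the $n$ relays act independently, the probability that none of them delivers is $\bigl(1-(1-q_\tau)p_c p\bigr)^n$. Running the same ``uniformly-among-the-successful-relays'' symmetry argument used to obtain $\phi^i(\tilde n)=\frac{1-q_\tau^{\tilde n}}{\tilde n}$ in (\ref{Psucc}) — every relay that delivers is equally likely to be the first — gives the tagged-relay success probability $\Phi(p)=\frac{1-\left(1-(1-q_\tau)p_c p\right)^n}{n}$.

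Next I would impose the two requirements in the definition of satisfaction equilibrium, with $\phi^i(n_a)$ replaced by $\Phi(p)$ in the utilities (\ref{ru}). For an interior mixed equilibrium $p^*\in(0,1)$ the relays must be indifferent between \mbox{'a'} and \mbox{'r'}, so I set $E[\mathcal U_i(\mbox{'a'},\alpha^*,n)]=E[\mathcal U_i(\mbox{'r'},\alpha^*,n)]$, i.e. $\alpha^*\Phi(p^*)-\sigma(1-\Phi(p^*))-\eta=-\alpha^*\Phi(p^*)-\gamma$. Solving for $\alpha^*$ gives $\alpha^*=\frac{\sigma(1-\Phi(p^*))+\eta-\gamma}{2\Phi(p^*)}$; substituting $\Phi(p^*)=\frac{1-A}{n}$ with $A=\bigl(1-(1-q_\tau)p_c p^*\bigr)^n$ and the energy term $\eta=e_r+e_t+\frac{e}{\lambda}(1-q_\tau)$, then clearing the factor $\lambda$, reproduces the first equation of the claimed system. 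This is the same few lines of algebra as in the pure case, now carried out with $n$ and $A$ in place of $n_a$ and $q_\tau^{n_a}$.

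For the source's constraint I would use that $E[\mathcal U_s]$ equals the overall delivery probability $1-A=1-\bigl(1-(1-q_\tau)p_c p^*\bigr)^n$. Requiring $1-A\ge\delta$ and isolating $p^*$ yields $1-(1-q_\tau)p_c p^*\le\sqrt[n]{1-\delta}$, i.e. $p^*\ge\frac{1-\sqrt[n]{1-\delta}}{(1-q_\tau)p_c}$, which is the second inequality. Finally, since every $p^*$ in the interval $\bigl[\frac{1-\sqrt[n]{1-\delta}}{(1-q_\tau)p_c},1\bigr]$ (nonempty whenever the lower bound does not exceed $1$) satisfies the source constraint and pins down a matching $\alpha^*$ through the indifference equation, the MSE set is a continuum, which establishes the ``infinitely many'' claim.

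I expect the main obstacle to be the first step — justifying $\Phi(p)$ rigorously. One has to argue that conditioning on the tagged relay's own participation does not break the symmetry used in the first-to-deliver count, and to reconcile the ``$\tilde n=np_c$ contacts each accepting with probability $p$'' narrative with the formula actually stated, in which all $n$ relays enter the product and the contact probability is folded into the per-relay delivery probability $(1-q_\tau)p_c p$. The remaining steps are then routine algebra together with a feasibility check ensuring $\alpha^*\in[0,\alpha_{max}]$ and that the lower bound on $p^*$ is at most $1$.
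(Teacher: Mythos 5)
Your proposal is correct and follows essentially the same route as the paper: the paper likewise folds contact, acceptance, and delivery into the single per-relay success probability $z=p_cp(1-q_\tau)$, obtains $\phi^i(n)=\frac{1-(1-z)^n}{n}$ by the same symmetry sum, solves the indifference condition $\mathcal{U}_i(\mbox{'a'},\alpha^*,n)=\mathcal{U}_i(\mbox{'r'},\alpha^*,n)$ for $\alpha^*$, and inverts the source constraint $n\phi^i(n)\ge\delta$ to get the bound on $p^*$. Your explicit substitution of $\eta=e_r+e_t+\frac{e}{\lambda}(1-q_\tau)$ and the closing remark that the feasible $p^*$ form a continuum are slightly more detailed than the paper's ``after some algebras,'' but the argument is the same.
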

\begin{proof}
At the equilibrium each relay is indifferent about which strategy to choose. Namely
\begin{equation*}
\mathcal{U}_i(\mbox{'a'},\alpha^*,n)=\mathcal{U}_i(\mbox{'r'},\alpha^*,n),
\end{equation*}
\begin{equation}
\label{eq}
\alpha^* \phi^i(n) -\sigma(1- \phi^i(n))-\eta= -\alpha^* \phi^i(n)-\gamma,
\end{equation}
where
\begin{eqnarray}
   \phi^i(n)&=& z \sum_{j=1}^{n} {n-1 \choose j-1} \frac{z^{n-1} (1-z)^{n-j}}{j}\nonumber\\&=&\frac{1-(1-z)^n }{n},\nonumber
\end{eqnarray}
with $z=p_cp(1-q_{\tau})$. Next, after some algebras from (\ref{eq})  we obtain
\begin{align}
\alpha^*=\frac{\lambda\sigma(n-1+(1-z^*)^{n})-n(\lambda(\gamma-e_r-e_t)-e (1-q_{\tau}))}{2\lambda(1-(1-Z^*)^{n})}.
\end{align}
where $z^*=(1-q_{\tau})p_cp^*$.\\
At Nash equilibrium, The source's objective is
\begin{equation}
 n\phi^i(n) \geq \delta,
\end{equation}
then 
\begin{equation}
n\frac{1-(1-(1-q_{\tau})p_cp^ *)^n }{n}\geq \delta,
\end{equation}
after some algebras we obtain,
\begin{equation}
 p^* \geq \frac{1- \sqrt[n]{1- \delta}}{(1-q_{\tau})p_c}
\end{equation}
\end{proof}

From now on, we refer to the case of strict equality as the Efficient Satisfaction Equilibrium (ESE). Indeed this point correspond to the minimum satisfaction level (QoS threshold) of the source node which means there will be no incentive for the source node to deviate unilaterally. This also correspond to a minimum stable accepting probability such that the whole source and relay nodes have no incentive to change their decision.

\subsection{Efficient Satisfaction Equilibria (ESE)}\label{ESE}
This section exhibits some properties of the efficient satisfaction equilibrium defined in the previous subsection. We consider and extend the Pareto-efficiency \cite{webb2007game},\cite{fudenberg1991game} as criterion to discuss and investigate its efficiency. Notice that the ESE is not always Pareto optimal.
\begin{definition}
A equilibrium is said to be strongly Pareto-optimal, if no player's payoff can be increased without decreasing the payoff of another player, i.e. $\nexists (\mbox{p'},\mbox{$\alpha$'}) $, such
  \begin{eqnarray*}
  \forall i \in n, E(\mathcal{U}_i(\mbox{p'},\mbox{$\alpha$'},n)] \geq E[\mathcal{U}_i( p^*,\alpha^*,n)] \nonumber \\ 
  \text{and} \quad \exists j \in n,  E(\mathcal{U}_j(\mbox{p'},\mbox{$\alpha$'},n)] > E[\mathcal{U}_j(p^*,\alpha^*,,n)] \nonumber \\ 
  \end{eqnarray*}
\end{definition}
\begin{proposition}
The ESE of the induced non-cooperative, symmetric One-shot caching game is Pareto optimal.
\end{proposition}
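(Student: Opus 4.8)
The plan is to exploit the symmetry of the game to collapse the Pareto question to a two–objective comparison and then to show that the ESE is an extreme point of the corresponding feasible frontier. Since all relays are interchangeable, every symmetric profile $(p,\alpha)$ yields a single common relay payoff, so the only quantities that matter are the source payoff $\mathcal{U}_s(p)=n\phi^i(n)=1-(1-z)^{n}$ with $z=(1-q_{\tau})p_cp$, and the relay payoff
\[
E[\mathcal{U}_i(p,\alpha,n)]=\alpha\,\phi^i(n)(2p-1)-p\sigma\big(1-\phi^i(n)\big)-p\eta-(1-p)\gamma .
\]
First I would record two structural facts. (i) The source payoff does not depend on $\alpha$ and is strictly increasing in $p$, because $z$ increases with $p$ and $x\mapsto 1-(1-x)^{n}$ is increasing; at the ESE it binds, $\mathcal{U}_s(p^{*})=\delta$. (ii) At any satisfaction equilibrium the indifference relation $\mathcal{U}_i(\mbox{'a'},\alpha^{*},n)=\mathcal{U}_i(\mbox{'r'},\alpha^{*},n)$ holds, so the common relay payoff equals the rejection payoff $-\alpha^{*}\phi^i(n)-\gamma$, with $\alpha^{*}$ pinned to $p^{*}$ by the closed form of the mixed–strategy lemma.

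Next I would run the comparison along the one–parameter family of mixed satisfaction equilibria $(p,\alpha(p))$ with $p\ge p^{*}$, since along this curve the reward is no longer free but slaved to $p$. By fact (i) the source payoff $\mathcal{U}_s(p)$ strictly increases as we move away from the ESE toward larger $p$, so the ESE is the source–minimal point of the family. I would then substitute $\alpha(p)$ into the rejection payoff and compute the sign of $\tfrac{d}{dp}\big(\alpha(p)\phi^i(n)\big)$; if this product is increasing in $p$ then the relay payoff $-\alpha(p)\phi^i(n)-\gamma$ is decreasing in $p$, so the ESE is simultaneously the relay–maximal point of the family. Having the source payoff minimized and the relay payoff maximized at the same profile exhibits $(p^{*},\alpha^{*})$ as an extreme point of the feasible frontier: any admissible move that raises the source payoff necessarily lowers the relay payoff, which is exactly the negation of Pareto–domination. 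Running this as a contradiction — assume $(p',\alpha')$ dominates, deduce $p'\ge p^{*}$ from the source constraint, then reach a strict decrease of some relay payoff — completes the argument.

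The hard part will be controlling the reward $\alpha$. Because $\alpha$ enters $E[\mathcal{U}_i]$ linearly through the coefficient $\phi^i(n)(2p-1)$ while leaving $\mathcal{U}_s$ untouched, an unrestricted maximization over $\alpha\in[0,\alpha_{max}]$ would be attained at an endpoint, and a free increase of the reward could raise every relay's payoff at no cost to the source. The whole proposition therefore rests on first tying $\alpha$ to $p$ through the equilibrium relation of the mixed–strategy lemma, and then establishing the sign of $\tfrac{d}{dp}\big(\alpha(p)\phi^i(n)\big)$ from the explicit expression for $\alpha^{*}$. This monotonicity computation is the genuine obstacle: it is precisely its sign that decides the claim, since if the product were decreasing in $p$ the relay payoff would grow with $p$ and a larger–$p$ equilibrium would dominate the ESE, which is the sense in which, as already noted, the ESE need not be Pareto optimal. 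I expect the verification to reduce, after inserting the closed form of $\alpha^{*}$ and using the monotonicity of $z\mapsto 1-(1-z)^{n}$, to checking a single inequality among the parameters $\sigma,\gamma,\eta,q_{\tau}$ and $\delta$.
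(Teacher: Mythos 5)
Your plan is sound in its framing but it is not a proof: you defer the one step that decides the claim, and you say so yourself (``it is precisely its sign that decides the claim''). Worse, when that step is actually carried out it goes the wrong way. Along the equilibrium manifold the indifference relation gives
\[
\alpha(p)\,\phi^i(n)=\tfrac{1}{2}\bigl(\sigma(1-\phi^i(n))+\eta-\gamma\bigr),
\]
and since $\phi^i(n)=\frac{1-(1-z)^n}{n}$ with $z=(1-q_{\tau})p_cp$ is strictly increasing in $p$, the product $\alpha(p)\phi^i(n)$ is strictly \emph{decreasing} in $p$. Hence the common relay payoff $-\alpha(p)\phi^i(n)-\gamma=-\tfrac{1}{2}\bigl(\sigma(1-\phi^i(n))+\eta+\gamma\bigr)$ \emph{increases} with $p$, at the same time as the source payoff $1-(1-z)^n$ increases. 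So any equilibrium with $p>p^*$ makes the source and every relay strictly better off than the ESE, which is the opposite of what your extreme-point argument requires. Your own preliminary observation already points at the same problem from another direction: if $(p',\alpha')$ ranges over arbitrary profiles (as the paper's Pareto definition literally allows), then for $p>1/2$ a free increase of $\alpha$ raises every relay's expected payoff $\alpha\phi^i(n)(2p-1)-\dots$ while leaving the source payoff untouched. Either way the domination you need to exclude is not excluded.

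For comparison, the paper's own proof does something much cruder: it considers only deviations of the special form $p'=\beta p^*$ with $\beta>1$ and $\alpha'=\psi\alpha^*$ with $0<\psi<1$, shows the source payoff rises, and then simply \emph{asserts} that the relays' expected payoff falls, without any computation and without addressing any other deviation (in particular, not the $\alpha$-only deviations you identify). So your reduction is genuinely more careful than the paper's argument, but it does not establish the proposition; to salvage it you would have to either exhibit a parameter condition under which $\alpha(p)\phi^i(n)$ is increasing in $p$ (the closed form above shows there is none, given $\sigma>0$), or restrict the admissible comparison set far more narrowly than either you or the paper currently does.
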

\begin{proof} In order to prove the strong Pareto-optimality, it is enough to show that for any couple $(\mbox{p'},\mbox{$\alpha$'})$ no strictly higher payoff can be obtained, without decreasing the payoff of other players. Let us, assume that $\exists (\mbox{p'},\mbox{$\alpha$'}) $ such as,
\begin{equation*}
\begin{cases}
& \mbox{p'}=\beta p^*,   \quad  \beta  >1 \\
 &\mbox{$\alpha$'}=\psi \alpha^*,  \quad   0<\psi < 1
\end{cases}
\end{equation*}
The meaning of this configuration: The source could increase its utility, if reward value decreases and the acceptance probability increases so,
  \begin{eqnarray*}
   E(\mathcal{U}_s( \mbox{p'},n)]&=&1-(1-(1-q_{\tau})p_c \mbox{p'})^n  \nonumber \\
          &=&1-(1-(1-q_{\tau})p_c \beta p^* )^n  \nonumber \\&&
   > 1-(1-(1-q_{\tau})p_c p*)^n  \nonumber \\
   \iff  E(\mathcal{U}_s( \mbox{p'},n)] &> &E(\mathcal{U}_s( p^*,n)] 
   \end{eqnarray*}
, i.e., the configuration increases the well-being of the source. However, this configuration can also decrease the well-being of the relays 
\begin{eqnarray*}
 && E(\mathcal{U}_i(\mbox{p'},\mbox{$\alpha$'},n))=\mbox{p'} D(\mbox{p'},n_a) \mathcal{U}_i(\mbox{'a'},\mbox{$\alpha$'},n_a) \nonumber \\ & &  + (1-\mbox{p'})D(\mbox{p'},n_a) \mathcal{U}_i(\mbox{'r'},\mbox{$\alpha$'},n_a) < E(\mathcal{U}_i(p^*,\alpha^*,n)) ,
\end{eqnarray*}
with $D(\mbox{p'},n_a)=\sum_{n_a=0}^{n-1} {n-1 \choose n_a}\mbox{p'}^{n_a}(1-\mbox{p'})^{n-n_a-1}$\\
We cannot improve the utility of the source without decrease the relays' utility, which contradicts the Pareto optimal definition. Consequently, the ESE is strong Pareto optimality.
\end{proof}
\section{Learning Algorithms}\label{LA}
In this, we present our proposal learning algorithms to attend the ESE discussed in previous section. In fact, we will give a formally description of the two Stochastic learning algorithm. Indeed, the stochastic learning technique has been successfully used in Distributed system, particularly in wireless networks. Briefly,  At each  iteration,  the automatons uses only the estimated value of their payoff to update their action value till converge to their unique best response. Indeed,  the source's strategy is a independent, continue action which is select a reward value $\alpha \in [0,\alpha_{max}]$, based on local observations, satisfying that the file-caring-relays succeed with probability greater then $\delta$. Precisely, the source does not need the strategies of the relays, it observes only its own payoff and its reward value assigned. Consequently, we propose a satisfactory stochastic learning algorithm based on \cite{sabir2009stochastic} which will lead the source to its optimal decision.
\begin{algorithm}
 \KwResult{Satisfaction equilibrium reward value $ \alpha^*$ }
 \textbf{Initialization}\; \par
 Assign a value for $\alpha \in [0,\alpha_{max}]$.\\
Expected  payoff  $\mu^*$.\\
\textbf{ Learning pattern:} For each iteration k\\
 Observe the estimate value of payoff $\hat{Us}$
 \[
 \hat{U_s}^{k+1}=\hat{U_s}^{k}+\epsilon_{k+1}(\mu^*-\hat{U_s}^{k})
 \]
\[
  \alpha_{k+1}= \max(\min(\alpha_{max},\alpha+\epsilon_{k+1} (\mu^*- \hat{U_s}^{k+1})),0)
\]
 \caption{Source satisfactory equilibrium stochastic learning algorithm}
\end{algorithm}\par
As regards the relays which will take the decision locally and independently based their probability distribution, according to reward value $ \hat{\alpha}$ estimated by source, so we propose imitative \textbf{CO}mbined fully \textbf{DI}stributed \textbf{PA}yoff and \textbf{S}trategy (CODIPAS) \cite{tembine2012distributed}. This choice of CODIPAS  is justified by its functionality such as, the player only need to observe the realization of their utility during previous iterations. They play independently their strategies based on the outdated observation. 
\begin{algorithm}
 \textbf{Initialization:}\; for each relay $i \in n$ do \par
 $\hat{U}_{i,0}^a,\hat{U}_{i,0}^r, p_{i,0}$\\ 
 Define the sequence up to T : $M_{i,k}^a, M_{i,k}^r, L_{i,k}^a, L_{i,k}^r$ for $ k \in \{1,....,T\}$. \\ 
\textbf{ Learning pattern:}  for each relay $i \in n$ do
 \[
 \hat{U}_{i,k+1}^a=\hat{U}_{i,k+1}^a+M_{i,k}^ae_{i,k}(U_{i,k}-\hat{U}_{i,k}^a)
 \]
  \[
 \hat{U}_{i,k+1}^r=\hat{U}_{i,k+1}^r+M_{i,k}^a(1-e_{i,k})(U_{i,k}-\hat{U}_{i,k}^r)
 \]
  \[
 p_{i,k+1}= \frac{p_{i,k}(1+L_{i,k}^a)^{\hat{U}_{i,k}^a}}{p_{i,k}(1+L_{i,k}^a)^{\hat{U}_{i,k}^a}+(1-p_{i,k})(1+L_{i,k}^r)^{\hat{U}_{i,k}^r}} 
 \]
 \caption{Imitative CODIPAS}
\end{algorithm}
where $e_k$  is the unit vector with the $ i^{th}$ component unity corresponding to the action selected at $k$

\section{Numerical Investigations}\label{MI}
We present here some numerical examples evaluating the performance of our contributions. We consider the following setting: $\delta=0.21$, $\sigma=0.2$ , $\gamma=0.15$ , $e=3.8\times10^{-5}$, $ e_r=2\times10^{-5}$ $e_t=2\times10^{-5}$, $\alpha_{max}=5, n=7$, and  we depict the behaviour of the source node and the
relay nodes while varying the horizon $\tau$ (file lifetime), the parameter $\lambda$ that stands for the contact rate and the number of the relay nodes participating in transmission. 
\begin{figure}[!htbp]
\centerline{\includegraphics[height=5cm,width=9cm]{./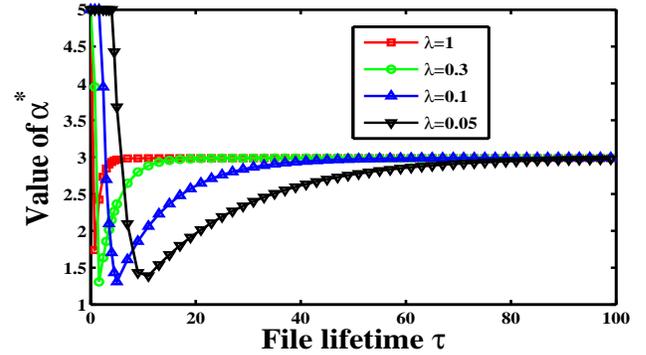}}
\centerline{(a)}
\centerline{\includegraphics[height=5cm,width=9cm]{./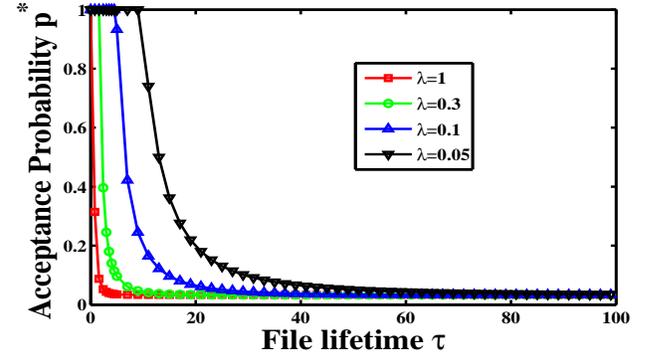}}
\centerline{(b)}
\caption{\footnotesize{Reward value and acceptance probability as function of file lifetime and contact rate}}\label{fig1}
\end{figure}
\par
We depict in Fig.\ref{fig1} (a) and Fig.\ref{fig1} (b) the acceptance probability and the reward value while varying the file lifetime for several values of the contact rate at SE. We notice that the relay nodes are cooperative and assist the file forwarding while  the source offers high value of reward. Namely, the relay nodes tend to accept with probability $1$. However, when the source decreases the reward the relay nodes decrease automatically their probability to cooperate, which is quite intuitive. This can be explained by the behaviour of the source and relays nodes, such as the source objective is offering minimum value of reward and ensuring its constraint to attain desired delivery probability $\delta$ , while the relay nodes have a problem of trade-off between the reward offered, energy and file lifetime. Indeed, the relay has benefit to accept caching the file as its expected lifetime and the probability to contact the destination are low, because the source is willing to give high value of reward. \par

Fig.\ref{fig2} (a) and Fig.\ref{fig2} (b)  show the impact of the number of relay nodes on the acceptance probability for different value of file lifetime and contact rate at SE, this probability decreases as the number of relays increases because the existence of several opponent can decrease the delivery probability of each relay nodes, hence their acceptance probability decreases afraid of to accept and fail to delivery and incur a punishment. Moreover, the influence  of $\lambda$ and $\tau$ is illustrated. Acceptance probability decreases as long as $\lambda$, $\tau$ increase, this can be easily explained by the reward offered by the source node which tries to minimum the reward value, then as $\lambda$, $\tau$ increase this value of reward is not more beneficial, so it does not cover the transmission cost of the relays. Consequently, they  relays tend  to not cooperate.

\begin{figure}[!htbp]
\centerline{\includegraphics[height=5cm,width=9cm]{./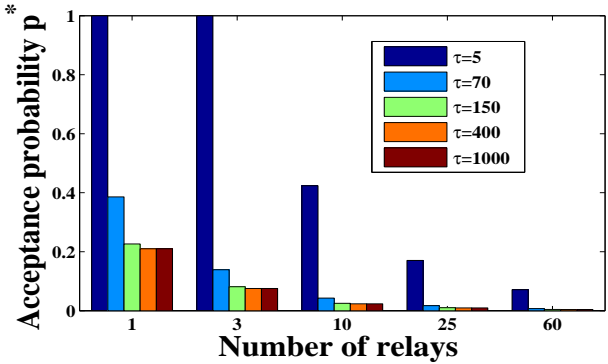}}
\centerline{(a)}
\centerline{\includegraphics[height=5cm,width=9cm]{./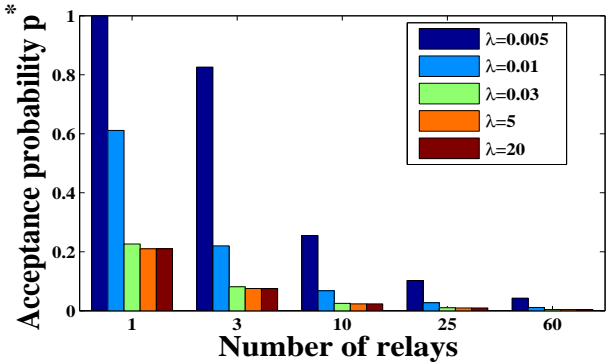}}
\centerline{(b)}
\caption{\footnotesize{ Acceptance probability as function of number of relays, file lifetime and contact rate}}\label{fig2}
\end{figure}
We depict in Fig.\ref{fig4} (a)  the delivery probability as function of the file lifetime and contact rate at SE, the figures illustrate the satisfaction regions where the source attains its objective, so while increasing the file lifetime the delivery probability increases till converges to the desired value  $\delta$. Hence, for each value of file lifetime $\lambda(\tau)$ corresponds a value of contact rate $\lambda(\tau^*)$ where the satisfaction regions of source begin. The same remark for the contact rate in Fig.\ref{fig4} (b)  for each value of file lifetime $\tau(\lambda)$ corresponds a value of contact rate $\tau(\lambda^*)$ where the satisfaction regions of source begin. \par
\begin{figure}[!htbp]
\centering
\centerline{\includegraphics[height=5cm,width=9cm]{./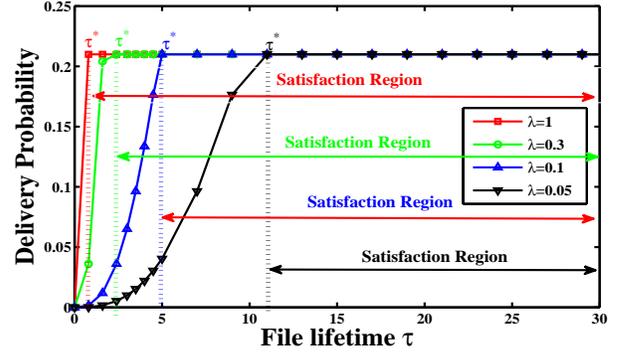}}
\centerline{(a)}
\centerline{\includegraphics[height=5cm,width=9cm]{./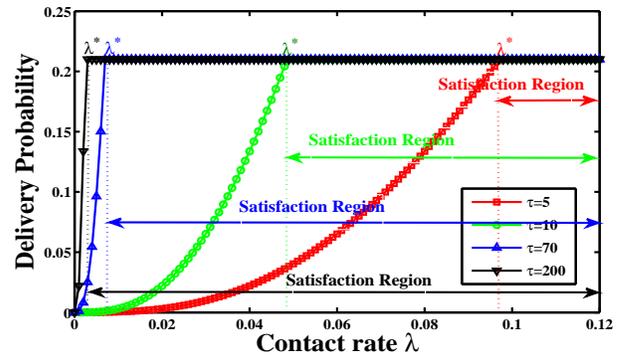}}
\centerline{(b)}
\caption{\footnotesize{ Delivery probability as function of file lifetime and contact rate}}\label{fig4}
\end{figure}
\begin{figure}[!htbp]
\centering
\centerline{\includegraphics[height=3cm,width=10cm]{./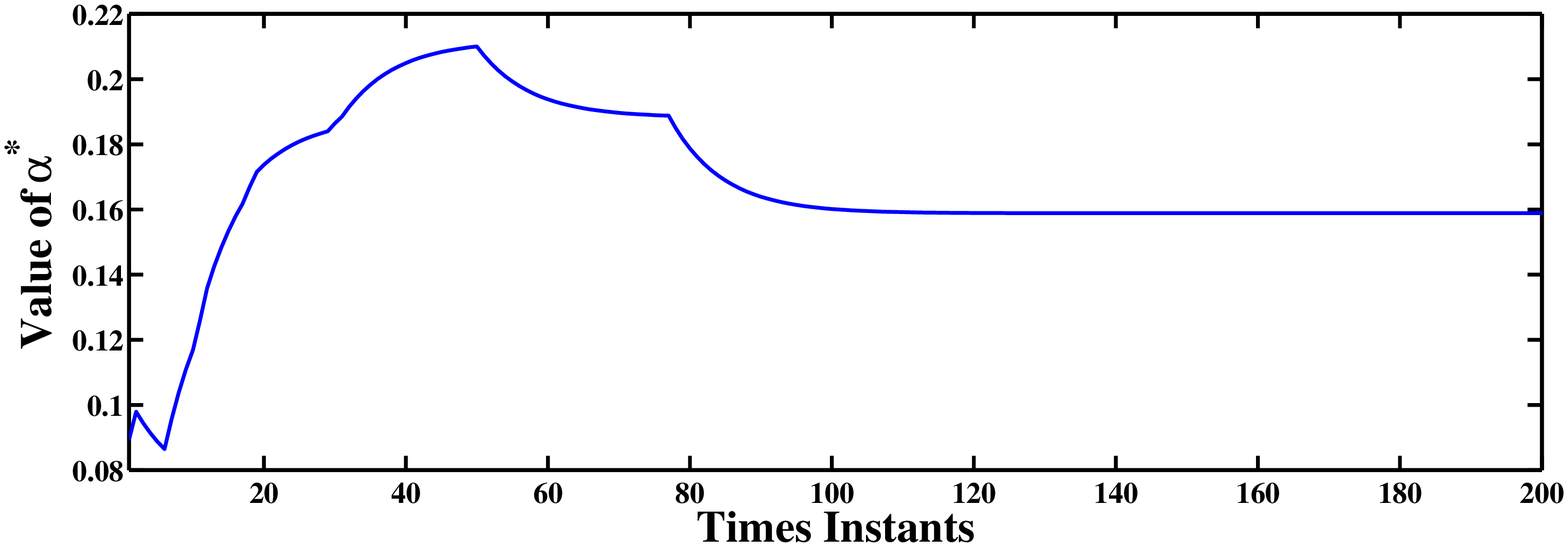}}
\centerline{(a)}
\centerline{\includegraphics[height=3cm,width=10cm]{./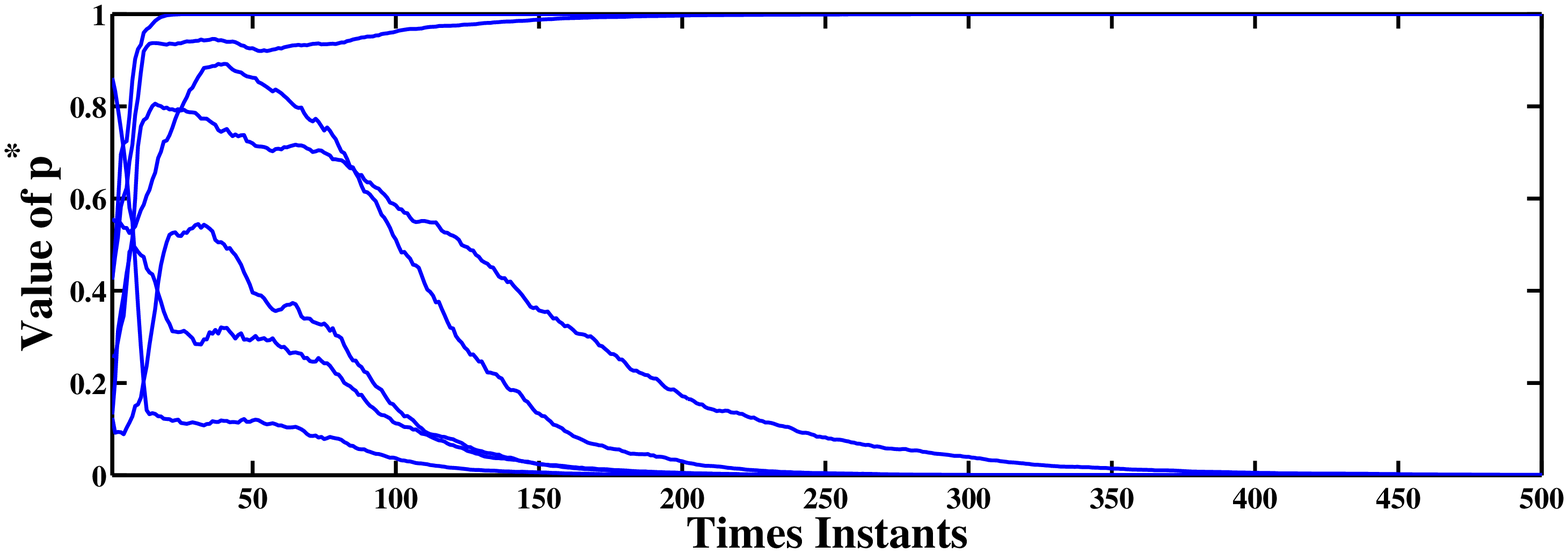}}
\centerline{(b)}
\centerline{\includegraphics[height=3cm,width=10cm]{./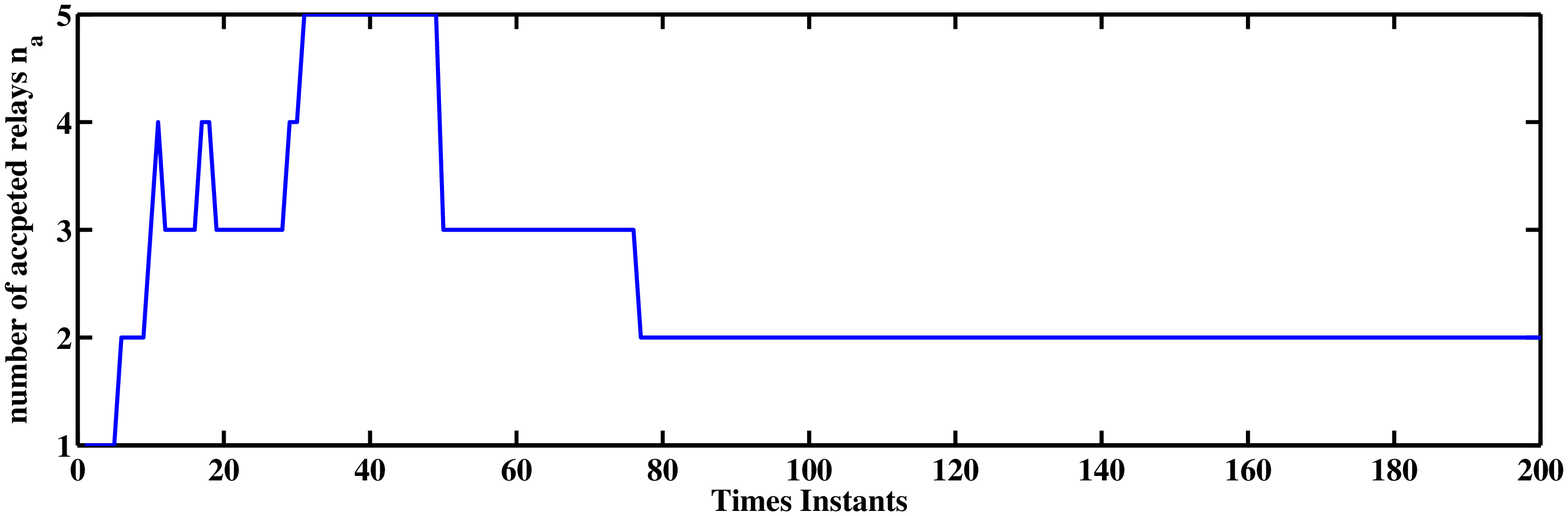}}
\centerline{(c)}
\caption{\footnotesize{Seeking the pure Satisfaction Equilibrium}}\label{fig5}
\end{figure}
In Fig.\ref{fig5} we consider a scenario involving seven relays with symmetric contact rate and file lifetime $ \lambda=0.015, \tau=100$. In fact, this figure depicts the behavior of the proposed learning algorithms over time and how they converge to the pure SE, such as Fig.\ref{fig5} (a) shows the convergence of the source satisfactory equilibrium stochastic learning algorithm to the optimal reward value  $\alpha^*$. The Fig.\ref{fig5} (b) stands for the  convergence of the relays algorithm (Imitative CODIPAS) to the pure optimal acceptance probability $p^* \in \{0,1\}$ and the Fig.\ref{fig5} (c) shows the number of relays that accept to cache, i.e the number of relay that play the pure action "accept". \par
\begin{figure}[!htbp]
\centerline{\includegraphics[height=3.2cm,width=10cm]{./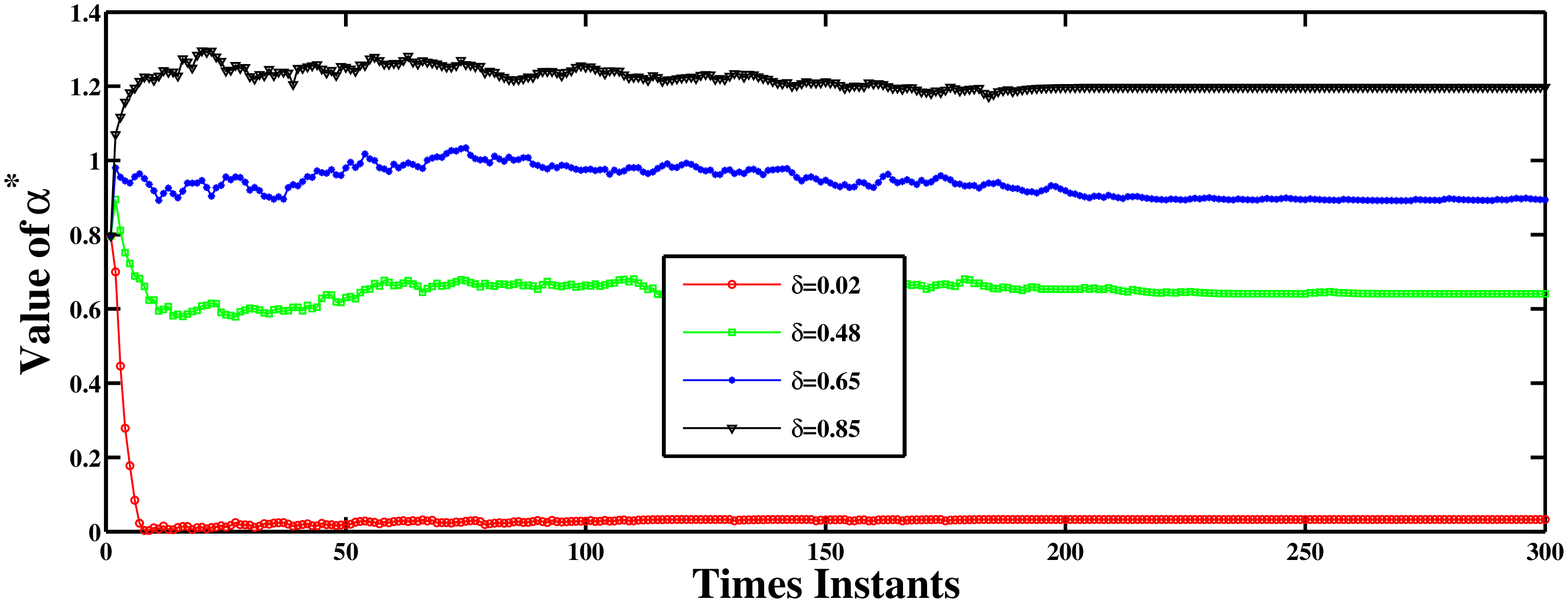}}
\centerline{(a)}
\centerline{\includegraphics[height=3.2cm,width=10cm]{./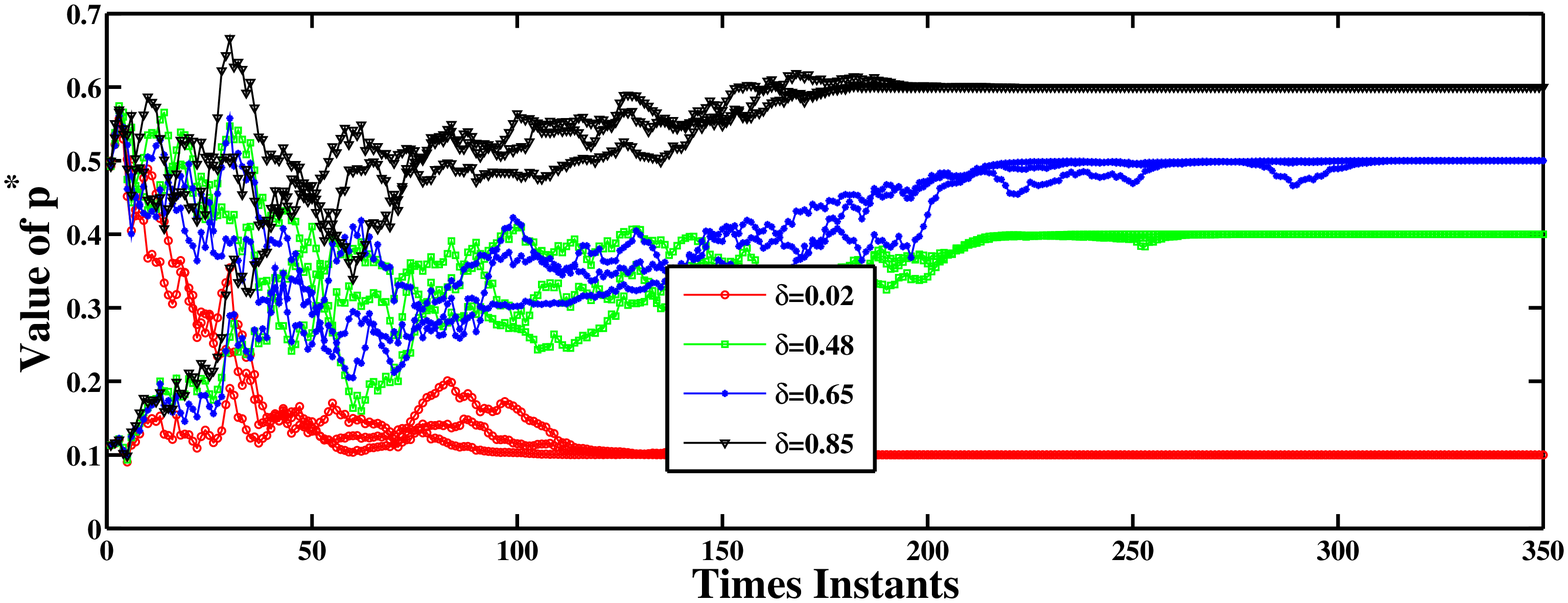}}
\centerline{(b)}
\caption{\footnotesize{Seeking the mixed Satisfaction Equilibrium}}\label{fig7}
\end{figure}
Moreover, in Fig.\ref{fig7} we considered a scenario including three relays with symmetric contact rate and file lifetime $ \lambda=0.015, \tau=100$ and  source with with different target value of delivery probability $\delta \in \{0.02, 0.48, 0.65, 0.85\}$. Hence, the Fig.\ref{fig7} (a) illustrates the convergence of the source satisfactory equilibrium stochastic learning algorithm to the optimal reward value  $\alpha^*$ that it is willing to offer  for the purpose that the relays accept to cooperate with high probability for different value of $\delta$. Whereas, the Fig.\ref{fig7} (b) shows the different acceptance probability where the relays algorithm converge for different value of reward. Precisely, the two algorithms converge independently and simultaneously to the symmetric pair vector $(\alpha^*,p^*)$ for each value of  target delivery probability desired by the source.

\section{Conclusion}\label{Con}
We investigated the support of QoS in DTN-like networks under energy/reward trade-off. We formulated the interaction between a source node and a set of relay nodes as a non-cooperative satisfaction game. Full satisfaction equilibria characterization for both pure and mixed strategies were provided. Then, we proposed two fully distributed learning algorithms to guarantee discovery of the source/relays satisfactory strategies without knowledge of any external information. The SE ensures to the source node to meet its target delivery probability while maximizing the payoff functions of the relay nodes.

\bibliographystyle{IEEEtran}
\bibliography{Ref}
\end{document}